\newtheorem{definition}{Definition}
\newtheorem{lemma}{Lemma}
\newtheorem{assumption}{Assumption}
\newcommand{\argmax}{\mathop{\rm arg~max}\limits}
\begin{document}
	\title{Decentralized Learning over Wireless Networks: The Effect of Broadcast with Random Access}
	
	\author{Zheng Chen, Martin Dahl, and Erik G. Larsson\\
		Department of Electrical Engineering, Link\"{o}ping University, Sweden.\\ Email: \{zheng.chen@liu.se, marda545@student.liu.se,  erik.g.larsson@liu.se\}
		\thanks{This work was supported in part by Zenith, Excellence Center at Link\"{o}ping - Lund in Information Technology (ELLIIT), Swedish Research Council, and Knut and Alice Wallenberg (KAW) Foundation.}}
	
	\maketitle
	
	\begin{abstract}	
		In this work, we focus on the communication aspect of decentralized learning, which involves multiple agents training a shared machine learning model using decentralized stochastic gradient descent (D-SGD) over distributed data. In particular, we investigate the impact of broadcast transmission and probabilistic random access policy on the convergence performance of D-SGD, considering the broadcast nature of wireless channels and the link dynamics in the communication topology. Our results demonstrate that optimizing the access probability to maximize the expected number of successful links is a highly effective strategy for accelerating the system convergence.
	\end{abstract} 
	
	\begin{IEEEkeywords}
		Decentralized learning, consensus optimization, wireless networks, broadcast, random access
	\end{IEEEkeywords}
	
	\section{Introduction}
	Decentralized learning is rooted in the theoretical framework of multi-agent optimization, where a group of agents collaborative in minimizing a common objective function \cite{decentrazlied-first-order}. Many methods and algorithms have been developing for solving such type of distributed optimization problems, such as distributed sub-gradient \cite{distributed-subgradient} and distributed ADMM \cite{distributed_admmd}. This paper focuses on the decentralized stochastic gradient decent (D-SGD) method, in which each agent combines local gradient computation with consensus-based model updating in an iterative manner \cite{yuan2016convergence, lian2017can, dso-ml}.

	A crucial aspect of D-SGD (and its variants) is the consensus formation among the agents, which heavily relies on information exchange and fusion within the network \cite{consensus-cooperation}. Although this linear averaging step appears straightforward from a mathematical perspective, the coordination of information exchange (transmission and reception) among agents in wireless networks is a non-trivial task. 
	Specifically, within the consensus updating step, each agent sends the same information (local model parameters) to its neighbors, which can be done through a single broadcast transmission rather than multiple link-based transmissions. In the meanwhile, concurrent transmissions from multiple nodes will create interference among them, which can lead to failed reception of information at the receiver side. Several existing theoretical studies have considered the effect of unreliable communication in decentralized federated learning over wireless networks. \cite{ye2021decentralized, fl-d2d, async-dl}. However, it is still unclear how to design appropriate medium access control (MAC) protocols and interference management schemes for achieving consensus-oriented communication in wireless networks. Further investigation is required to fully understand the impact of communication design on link dynamics, and ultimately on the performance of D-SGD over wireless networks.

	Recent research has explored ways to customize the communication pattern in decentralized learning to optimize convergence speed and reduce communication costs \cite{MATCHA, laplacian_sampling}. These studies focus on link-based scheduling instead of broadcast-based scheduling. 
	Incorporating broadcast transmission of information introduces additional challenges in the communication scheduling strategy, since the weight matrix in every iteration cannot be guaranteed to be symmetric due to the asymmetric information flow in broadcast communication.

	In this work, we focus on MAC layer communication scheme for decentralized learning with probabilistic random access and broadcast transmission. In every iteration, nodes access the channel and broadcast their model updates with a certain probability. A node can successfully receive a packet if there is only one neighbor broadcasting. Collision occurs when multiple nodes broadcast to a common neighbor. Based on this simple success or collision model, we demonstrate that there exists a strong correlation between the access probability that maximizes the number of successful links and the probability that maximizes the second-largest eigenvalue of the expected weight matrix. This finding provides valuable insights into the design of random access protocols with spatial reuse of resources, for the purpose of accelerating convergence in decentralized learning over large-scale wireless networks.

	\section{System Model}
	We consider a decentralized learning system where $N$ nodes collaborate in training a shared machine learning (ML) model, parameterized by a vector $\boldsymbol{x}\in \mathbb{R}^d$. The goal of model training is to find the optimal model parameter vector as the solution to the following problem
	\vspace{-0.1cm} 
	\begin{equation}
		\min\limits_{\boldsymbol{x}\in \mathbb{R}^d} F(\boldsymbol{x})=\frac{1}{N}\sum_{i=1}^{N}F_i(\boldsymbol{x}),
		\label{eq:prob-def}
	\end{equation}
	where $F(\boldsymbol{x})$ is the global objective function and $F_i(\boldsymbol{x})$ is the local objective function at node $i$. Let $ \mathcal{D}_i$ represent the training data available at the $i$-th node, then the local objective function can be written as 
	\vspace{-0.1cm} 
	\begin{equation}
		F_i(\boldsymbol{x})=\frac{1}{|\mathcal{D}_i|}\sum_{s\in \mathcal{D}_i}l(\boldsymbol{x},s),
	\end{equation}
	where $l(\boldsymbol{x},s)$ is the local loss function for sample $s$.

	\subsection{Graph Model for Network Connectivity}
	The network is modeled as an undirected graph $\mathcal{G}(\mathcal{V},\mathcal{E})$ with $\mathcal{V}=\{v_1,\ldots, v_N\}$ representing the set of nodes and $\mathcal{E}\subseteq\mathcal{V}\times\mathcal{V}$ representing the set of links. The connectivity of the graph is described by the adjacency matrix $\mathbf{A}\in\mathbb{R}^{N\times N}$, where $A_{ij}$ (the $i$-th row and $j$-th column of $\mathbf{A}$) is  $1$ if $(i,j)\in\mathcal{E}$, and $0$ otherwise. Let $\mathcal{N}_i=\{j\in\mathcal{V} |(i,j)\in\mathcal{E}\}$ denote the set of neighbors of node $i$. The degree of node $i$ is defined as $d_i=\sum_{j=1}^{N}A_{ij}=|\mathcal{N}_i|$. The degree matrix is defined as $\mathbf{D}=\textrm{diag}(d_1,\ldots, d_N)$. The Laplacian matrix is $\textbf{L}=\textbf{D}-\textbf{A}$.
	
	Theoretically, all nodes in a wireless environment are ``connected'' due to the broadcast nature of wireless channels. In this work, to simplify our analysis, we consider that any pair of nodes $(i,j)$ has a well-defined connectivity indicator with binary status, e.g., two nodes are considered to be connected when the link distance is smaller than a threshold. 
	
	\subsection{D-SGD over Networked Agents}
	Consensus-based D-SGD is a commonly used algorithm for solving the decentralized optimization problem defined in \eqref{eq:prob-def}. The plain version of D-SGD consists of three main steps: 1) local stochastic gradient computation; 2) communication with neighbors; 3) consensus-based model fusion and updating.
	
	Let $g_i(\boldsymbol{x}_i^{(t)})=\nabla F_i(\boldsymbol{x}_i^{(t)};s_i)$ denote the stochastic gradient vector at node $i$ computed over one or a subset of randomly selected data samples $s_i\in \mathcal{D}_i$ in iteration $t$. The model parameter vector updates by the following iteration rule
	\begin{equation}\label{DSGD_equation}
		\boldsymbol{x}_i^{(t+1)}=\sum_{j=1}^{N}W_{j,i}^{(t)}\left[\boldsymbol{x}_j^{(t)}-\eta g_i(\boldsymbol{x}_i^{(t)})\right].
	\end{equation} 
	Here, $W_{i,j}^{(t)}$ indicates the weight that node $i$ assigns to the model update received from node $j$. We can write all weight coefficients in a matrix form $\boldsymbol{W}^{(t)}\in\mathbb{R}^{N\times N}$, referred to as the weight matrix or the mixing matrix.

	\begin{definition}
		A square and non-negative matrix $\mathbf{M}$ is called
		\begin{itemize}
			\item (row) stochastic if each row of the matrix sums to $1$;
			\item doubly stochastic if each row and each column sum to $1$. 
		\end{itemize} 
	\end{definition}
	
	Let $\overline{\boldsymbol{x}}^{(t)}=\frac{1}{N}\sum_{i=1}^{N}\boldsymbol x_i^{(t)}$ represent the average model in the current iteration $t$. As shown in \cite{lian2017can, MATCHA, laplacian_sampling}, the convergence of D-SGD (in the sense that $\frac{1}{T}\sum_{t=1}^{T}\mathbb{E}[\lVert \nabla F(\overline{\boldsymbol{x}}^{(t)})\rVert]$ becomes sufficiently small when $T$ increases) can be proved if the following assumptions hold.
	
	\begin{assumption}
		All local objective functions $F_i(x)$ are differentiable and the local gradients are L-Lipschitz continuous, i.e., $\lVert\nabla F_i (\boldsymbol{x_1})-\nabla F_i (\boldsymbol{x_2})\rVert\leq L \lVert\boldsymbol{x_1}-\boldsymbol{x_2} \rVert$, $\forall \boldsymbol{x_1}, \boldsymbol{x_2}\in \textnormal{dom~}F$.
	\end{assumption}
	\begin{assumption}
		The stochastic gradient at each node is an unbiased estimate of the true gradient of the local objective function, i.e., $\mathbb{E}[g_i(\boldsymbol{x})]=\nabla F_i (\boldsymbol{x})$.
	\end{assumption}
	
	\begin{assumption}
		The variance of the stochastic gradient at each node is uniformly bounded, i.e, $\mathbb{E}\left[\lVert g_i(\boldsymbol{x})-\nabla F_i (\boldsymbol{x})\rVert^2\right]\leq \sigma^2$.
	\end{assumption}
	
	\begin{assumption}
		The deviation between the local gradient at each node and the global gradient is bounded, i.e.,  $\mathbb{E}\left[\lVert \nabla F_i(\boldsymbol{x})-\nabla F (\boldsymbol{x})\rVert^2\right]\leq \xi^2$.
	\end{assumption}
	
	\begin{assumption}
		The mixing matrix $\boldsymbol{W}$ is symmetric and doubly stochastic, with the second largest absolute eigenvalue $\beta=\max\{|\lambda_2(\boldsymbol{W})|, |\lambda_N(\boldsymbol{W})|\}$ smaller than $1$.  
	\end{assumption}
	Note that $\beta$ is the spectral radius of $\mathbf{W}-\frac{1}{N}\textbf{1}\textbf{1}^{\transp}$, where $\textbf{1}$ is the all-ones column vector. From consensus perspective, the smaller $\beta$ is, the faster convergence we can achieve.

	\section{D-SGD with Random Access and Broadcast Transmission}
	We consider a probabilistic random access scheme for the broadcast transmission of model updates from all network nodes. In every iteration, the entire model parameter vector is considered as one packet, and its transmission consumes one time slot. Every node makes independent and random decisions (i.e., Bernoulli trials) on whether to access the channel and broadcast its current model or remain silent. We define $p$ as the access (or broadcast) probability of all nodes.

	\begin{definition}
		The \textit{broadcast} decision vector $\mathbf{b}^{(t)}\in\mathbb{R}^{N}$ is a vector whose $j$-th element is given as
		\begin{equation*}
			b_{j}^{(t)} = \begin{cases}
				1 &\text{if node j \textit{broadcasts} at iteration $t$}\\
				0 &\text{otherwise}
			\end{cases}.
		\end{equation*}
		With probabilistic random access policy, we have $\mathbb{E}[b_{j}^{(t)}]=p$.
	\end{definition}
	In every slot, a node can receive at most one packet successfully from its neighbors. When multiple nodes broadcast to a common neighbor, it will result in a collision and no information will be decoded, as illustrated by the example in Fig.~\ref{fig:example}. Then, we define a matrix $\mathbf{T}^{(t)}$ that contains binary variables indicating the status of each transmission.
	
	\begin{figure}[t!]
		\centering
		\includegraphics[width=0.7\columnwidth]{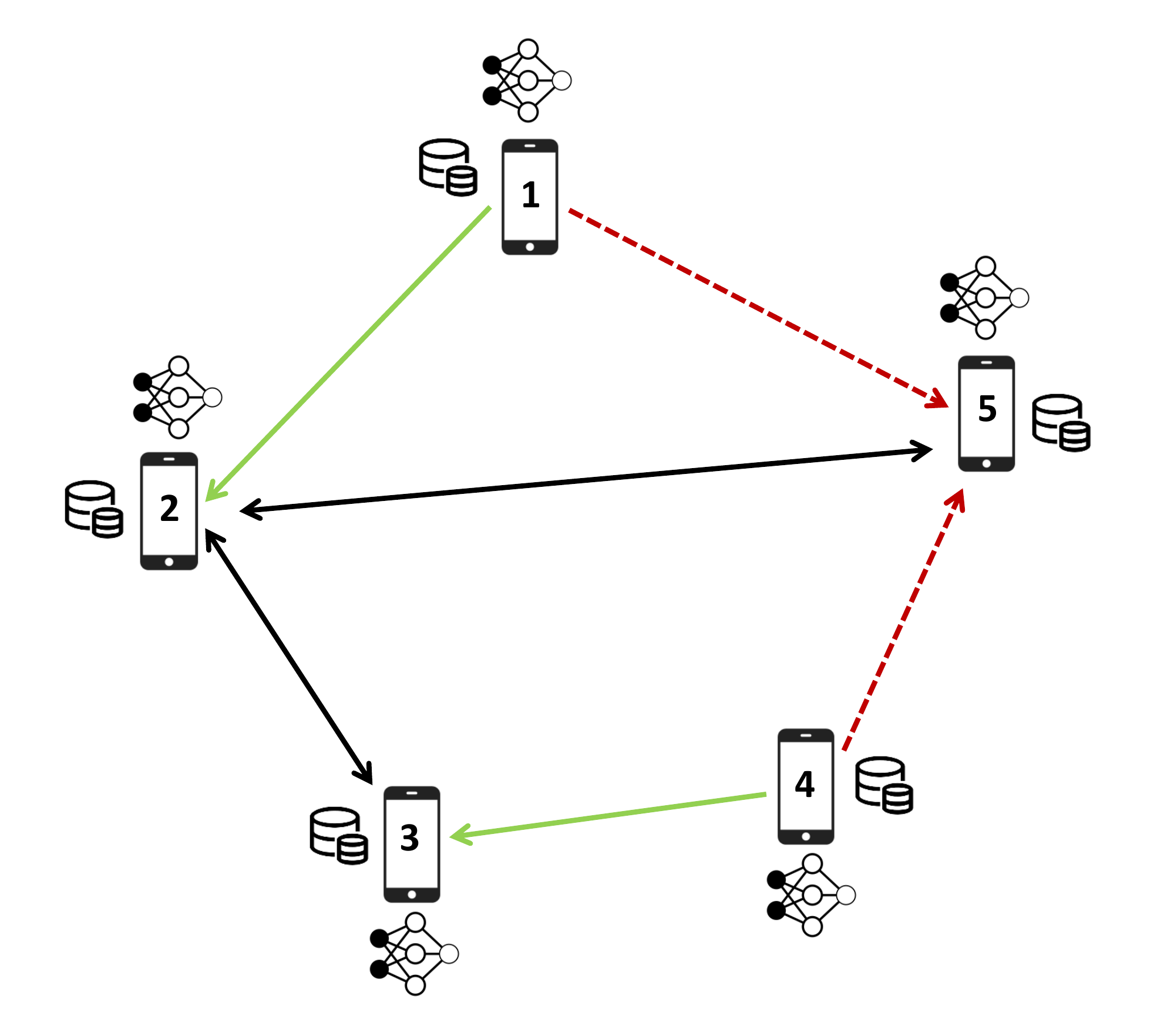}
		\caption{An example of random access with broadcast transmission. Node $1$ and $4$ broadcast simultaneously, causing collision at node $5$. The transmissions in the links $(1,2)$ and $(4,3)$ are successful.}
		\vspace{-0.1cm}
		\label{fig:example}
	\end{figure}

	\begin{definition}
		\label{def:trans_matrix}
		The \textit{transmission} status matrix $\mathbf{T}^{(t)}\in\mathbb{R}^{N\times N}$ is a square matrix where $(i,j)$-th element is a binary number indicating the outcome (success or failure) of the transmission of model update from node $j$ to node $i$, i.e.,
		\begin{equation*}
			T^{(t)}_{i,j} = \begin{cases}
				1 &\text{if node $j$ successfully transmits $\boldsymbol{x}_j^{(t)}$ to node $i$}\\
				0 &\text{otherwise.}
			\end{cases}.
		\end{equation*}
		For the diagonal element, we have $T^{(t)}_{i,i}=1, \forall i\in\{1,\ldots, N\}$. 
	\end{definition}

	Based on the success or collision assumption, a node $i$ can successfully receive information from its neighbor node $j$ if and only if: 1) node $j$ decides to broadcast; 2) node $i$ decides not to broadcast;
	3) all neighbors of node $i$ except node $j$ decide not to broadcast.
	Combining these conditions, we have
	\begin{equation}\label{transmission_dynamics}
		T^{(t)}_{i,j}=b^{(t)}_{j}(1-b_i^{(t)})\prod_{k\in\mathcal{N}(i)\setminus j}^{}(1-b_k^{(t)}), 
	\end{equation}
	for all $i\neq j$. Then, we obtain the probability of successful model transmission from node $j$ to node $i$ as	
	\begin{equation}
		p_{i,j}^{\text{suc}}=\mathbb{E}[T^{(k)}_{i,j}]=p(1-p)^{d_i},
		\label{eq:link-success}
	\end{equation}  
	where $d_i$ is the degree of node i.
	We refer to this probability as the link success probability of $(j,i)\in\mathcal{E}$.
	In general, $p_{i,j}^{\text{suc}}\neq p_{j,i}^{\text{suc}}$ due to the difference in their node degrees.

	\subsection{D-SGD with Link Failures}
	In the case with perfect communication and fixed topology, a common choice of the weight matrix design is 	
	\begin{equation}
		\mathbf{W} = \mathbf{I} - \epsilon \mathbf{L} \text{, where } \epsilon <\frac{1}{\max_{i\in\mathcal{V}}\{d_{i} \}}. 
		\label{eq:weight_initial}
	\end{equation}
	This choice of weight design connects the convergence speed of average consensus directly to the spectral property of the graph Laplacian.
	In our system model with random link failures caused by broadcast collision, using the initial weight design as in \eqref{eq:weight_initial},  we will obtain a time-varying weight matrix \vspace{-0.05cm} 
	\begin{equation}
		\mathbf{W}^{(t)} = \mathbf{W}\odot\mathbf{T}^{(t)},
	\end{equation}
	where $\odot$ is the Hadamard product and $\mathbf{T}^{(t)}$ is defined in Definition~\ref{def:trans_matrix}. Note that $\mathbf{W}^{(t)}$ is not guaranteed to be row- or column-stochastic. To compensate for the missing information, we apply the biased compensation method in \cite{fagnani2009average}, which allows every node to add the weights of the failed links to its own previous estimate. This strategy will produce a new row-stochastic  matrix $\overline{\mathbf{W}}^{(t)}$ with the $(i,j)$-th element being 
	
	\begin{equation}\label{eq:w-compensated}
		\overline{W}^{(t)}_{i,j} = \begin{cases}
			W^{(t)}_{i,j}T^{(t)}_{i,j} &\text{if $i\neq j$}\\
			1 - \sum_{k=1, k\neq i}^{n}W^{(t)}_{i,k}T^{(t)}_{i,k}  &\text{if $i=j$.}
		\end{cases},
	\end{equation} 
	Eventually, in every iteration, the model parameter vector at node $i$ updates by the following rule 	
	\begin{equation}
		\boldsymbol{x}_i^{(t+1)}=\sum_{j=1}^{N}\overline{W}_{i,j}^{(t)}\left[\boldsymbol{x}_j^{(t)}-\eta g(\boldsymbol{x}_i^{(t)})\right].
	\end{equation} 
	
	Even though the network  topology is originally modeled as an undirected graph, this asymmetric link success/failure will cause the new weight matrix $\overline{\mathbf{W}}^{(t)}$ to be non-symmetric. Naturally, this implies that some extra bias will be introduced in the converged model.

	\begin{algorithm}[t!]
		\caption{D-SGD with random access and broadcast transmission}
		\textbf{Input:} Access probability vector $\mathbf{p}$, adjacency matrix $\mathbf{A}$, mixing matrix $\mathbf{W}$, initial parameters $\boldsymbol{x}$, number of iterations $T$ and step-size $\eta$.
		\begin{algorithmic}[1]
			\State $t\gets1$
			\While{$t \leq T$}
			\State $\mathbf{b}^{(t)} \sim Be(\mathbf{p})$  ~~\%Generate broadcast decisions by Bernoulli trials with probability $p$
			\State $\mathbf{T}^{(t)} \gets \tau(\mathbf{b}^{(t)}, \mathbf{A})$  ~~\%Generate transmission status matrix according to \eqref{transmission_dynamics}
			\State $\mathbf{W}^{(t)} \gets \mathbf{W}\odot\mathbf{T}^{(t)} $   ~~\%Obtain new weight matrix that includes link success/failure
			\State $\overline{\mathbf{W}}^{(t)} \gets w(\mathbf{W}^{(t)})$~~\%Apply biased compensation method according to \eqref{eq:w-compensated}
			\For{$i = \{1,2,...,N\}$}
			\State $\boldsymbol{x}_i^{(t+1)} \gets \sum_{j=1}^{N}\overline{W}_{i,j}^{(t)}\left[\boldsymbol{x}_j^{(t)}-\eta g(\boldsymbol{x}_i^{(t)})\right]$
			\EndFor
			\State $t \gets t + 1$
			\EndWhile
		\end{algorithmic}
		\vspace{-0.1cm}
	\end{algorithm}
	\subsection{Optimizing Access Probability}
	From the link success probability given in \eqref{eq:link-success}, we obtain the average number of successful links in the network as
	\begin{equation}\label{eq:random_policy_equal_prob}
		\mathbb{E}[N^{\text{suc}}] = \sum_{i\in \mathcal{V}, j \in\mathcal{N}(i)}^{}p_{i,j}^{\text{suc}}
		= p\sum_{i\in \mathcal{V}}d_i(1-p)^{d_i}.
	\end{equation}
	We refer to this as the expected throughput of the network.
	
	A common approach for fast convergence in distributed consensus or decentralized optimization is to minimize the second largest absolute eigenvalue of the mixing matrix (equivalently, the spectral radius of $\mathbf{W}-\frac{1}{N}\textbf{1}\textbf{1}^{\transp}$). In this work, our speculation is that with random-access-based broadcast, the average throughput serves as a natural approximation for measuring how well connected (in terms of successful information flow) a network is given the base topology. Therefore,
	we intend to find the throughput-optimal access probability. By taking the first-order derivative of $\mathbb{E}[N^{\text{suc}}]$ with respect to $p$ and setting it to 0, we obtain
	\begin{equation}\label{eq:d_random_policy_equal_prob}
		\begin{aligned}
			\frac{d\mathbb{E}[N^{\text{suc}}]}{dp} &=  \sum_{i\in \mathcal{V}}d_i(1-p)^{d_i} - p\sum_{i\in \mathcal{V}}d_i^2(1-p)^{d_i - 1}\\
			&=\sum_{i\in \mathcal{V}}d_i(1-p)^{d_i - 1}(1-p(1+d_i)) = 0.
		\end{aligned}
	\end{equation}
	Any $p\in(0,1)$ that satisfies the equality in \eqref{eq:d_random_policy_equal_prob} is a global optimal solution for maximizing $\mathbb{E}[N^{\text{suc}}]$.
	
	\begin{lemma}
		\label{lemma1}
		For a network modeled by a connected undirected graph with symmetric and circulant adjacency matrix $\mathbf{A}$ (e.g., ring and complete graphs), when all nodes access the channel and broadcast with the same probability $p$, we have 
		\begin{equation}
			\argmax\limits_{p} \mathbb{E}[N^{\text{suc}}]= \argmax\limits_{p}\rho \left(\mathbb{E}[\overline{\mathbf{W}}^{(t)}]-\textbf{1}\textbf{1}^{\transp}/N\right),
		\end{equation}
		where $\rho(A)$ means the spectral radius of a square matrix $A$.
	\end{lemma}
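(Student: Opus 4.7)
The plan is to exploit the circulant structure of $\mathbf{A}$ so that $\mathbb{E}[\overline{\mathbf{W}}^{(t)}]$ reduces to a one-parameter scaling of the Laplacian-based mixing matrix $\mathbf{W} = \mathbf{I} - \epsilon \mathbf{L}$. First I would observe that a symmetric circulant $\mathbf{A}$ forces the graph to be $d$-regular, so the link success probability in \eqref{eq:link-success} is uniform, $\mathbb{E}[T^{(t)}_{i,j}] = p(1-p)^d =: c(p)$ for every $(i,j)\in\mathcal{E}$. Substituting into \eqref{eq:w-compensated} with $W_{i,j} = \epsilon A_{i,j}$ for $i \neq j$ and taking entry-wise expectations gives off-diagonal entries $\epsilon c(p) A_{i,j}$ and diagonal entries $1 - \epsilon d\, c(p)$, so that
\begin{equation*}
\mathbb{E}[\overline{\mathbf{W}}^{(t)}] \;=\; \mathbf{I} - \epsilon\, c(p)\, \mathbf{L}.
\end{equation*}
This structural identity is the linchpin of the argument, collapsing the entire $p$-dependence of the expected mixing matrix into the single scalar $c(p)$.

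Next I would diagonalize. Since $\mathbf{L}$ is symmetric circulant with eigenvalues $0 = \lambda_1 < \lambda_2 \leq \cdots \leq \lambda_N$ and null eigenvector $\mathbf{1}$, the matrix $\mathbb{E}[\overline{\mathbf{W}}^{(t)}]$ has eigenvalues $\mu_k = 1 - \epsilon c(p)\lambda_k$ with $\mu_1 = 1$ on span$(\mathbf{1})$. Subtracting $\tfrac{1}{N}\mathbf{1}\mathbf{1}^{\transp}$ removes precisely the eigenpair $(\mathbf{1},1)$ and leaves the rest intact, yielding
\begin{equation*}
\rho\Bigl(\mathbb{E}[\overline{\mathbf{W}}^{(t)}] - \tfrac{1}{N}\mathbf{1}\mathbf{1}^{\transp}\Bigr) \;=\; \max_{k\geq 2}\bigl|1 - \epsilon c(p)\lambda_k\bigr|.
\end{equation*}
The restriction $\epsilon < 1/d$, the elementary estimate $\max_p c(p) = d^d/(d+1)^{d+1}$, and the bound $\lambda_N \leq 2d$ valid on any $d$-regular graph keep each $\epsilon c(p)\lambda_k$ strictly in $[0,1)$, so every $\mu_k$ is positive and the spectral radius collapses to the monotone-in-$c(p)$ expression $1 - \epsilon c(p)\lambda_2$. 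On the throughput side, degree-regularity reduces \eqref{eq:random_policy_equal_prob} to $\mathbb{E}[N^{\text{suc}}] = Nd\,c(p)$, also monotone in $c(p)$. Both objectives therefore share the unique interior stationary point $p^{\star} = 1/(d+1)$ extracted from \eqref{eq:d_random_policy_equal_prob}, which proves the claimed equality of arg-optima.

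The main obstacle is the sign control in the spectral step: I must rule out the possibility that $\max_{k\geq 2}|1 - \epsilon c(p)\lambda_k|$ is attained at a large Laplacian eigenvalue via a negative $\mu_k$, because if it were, the spectral-radius objective would cease to be monotone in $c(p)$ and the equivalence would fail. The required uniform bound $\epsilon c(p)\lambda_N < 1$ is cheap to obtain here from the three ingredients listed above, but it is exactly what breaks once the graph is no longer regular: with heterogeneous degrees the factor $p(1-p)^{d_i}$ differs across rows, $\mathbb{E}[\overline{\mathbf{W}}^{(t)}]$ is no longer a single-parameter perturbation of $\mathbf{I}$ by $\mathbf{L}$, and the two arg-optima need not coincide. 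This is precisely why the lemma is confined to circulant, and hence regular, topologies.
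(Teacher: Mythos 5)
The paper defers its own proof to a longer version, so there is nothing to compare your argument against; judged on its own terms, your proof is correct and essentially complete. The reduction $\mathbb{E}[\overline{\mathbf{W}}^{(t)}] = \mathbf{I} - \epsilon\, c(p)\,\mathbf{L}$ with $c(p) = p(1-p)^d$ is exactly the right linchpin, the deflation of the eigenpair $(\mathbf{1},1)$ is valid because $\mathbf{L}$ is symmetric so the remaining eigenvectors are orthogonal to $\mathbf{1}$, and your sign control is sound: $\epsilon < 1/d$, $c(p) \le d^d/(d+1)^{d+1}$ and $\lambda_N \le 2d$ give $\epsilon\, c(p)\,\lambda_N \le \tfrac{2}{d+1}\bigl(\tfrac{d}{d+1}\bigr)^{d} < 1$, so indeed $\rho = 1 - \epsilon\lambda_2 c(p)$ with $\lambda_2>0$ by connectivity. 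Two remarks. First, be explicit about orientation: the lemma as printed has $\argmax_p \rho(\cdot)$ on the right-hand side, but your $\rho = 1-\epsilon\lambda_2 c(p)$ is \emph{decreasing} in $c(p)$, so its maximizers on $[0,1]$ are the endpoints $p\in\{0,1\}$; what you actually establish is $\argmax_p \mathbb{E}[N^{\mathrm{suc}}] = \argmin_p \rho(\cdot) = \{1/(d+1)\}$, which is what the rest of the paper (e.g., the discussion of Fig.~\ref{fig:optimal-proba}) clearly intends --- the phrase ``equality of arg-optima'' papers over the fact that one objective is maximized and the other minimized at the shared unique critical point of $c$. Second, your argument never uses the circulant structure beyond the fact that it forces $d$-regularity, so you have in fact proved the statement for every connected regular graph; this is a mild strengthening worth stating, and your closing observation about why heterogeneous degrees break the single-parameter reduction correctly identifies the obstruction in the general case.
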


	\begin{proof}
		Due to space limit, the proof will be provided in a longer version of this paper.
		\vspace{-0.1cm}
	\end{proof}
	This lemma shows that choosing an access probability that maximizes the expected network throughput can be a good strategy for access control in decentralized learning.

	\section{Simulation Results}
	We created a network of $N=20$ nodes with two topologies: 1) Erdős–Rényi random graph; 2) ring graph, as illustrated in Fig.~\ref{fig:network}.
	Two simple learning tasks are considered: 1) regression; 2) classification.\footnote{More extensive simulation results using larger learning models and real data will be included in an extended version of this paper.}
	The regression task is to fit a horizontal line $\hat{y} = \theta$ for estimation of a bias with added Gaussian noise $ y = b + w, w \sim \mathcal{N}(0,\sigma^2)$. The classification task is to fit a linear model with softmax activation $\hat{\mathbf{y}} = \sigma(\boldsymbol{\uptheta}^T\mathbf{x})\in\mathbb{R}^4$ for classification of clusters. The cluster samples are generated by $\mathbf{x} = \mathbf{c}_j + \mathbf{w} \in\mathbb{R}^2, \mathbf{y} = j$ where $\mathbf{w}\sim \mathcal{N}(0, \Sigma)$ and $\mathbf{c}_j\sim \mathbf{U}(-1,1)$ for class index $j\in\{1,2,3,4\}$. The local objective functions are defined as the L2 and cross-entropy loss of predictions for the regression and classification tasks, respectively.

	\begin{figure}[t!]
		\centering
		\begin{subfigure}[b]{0.49\columnwidth}
			\centering
			\includegraphics[width=1.05\columnwidth]{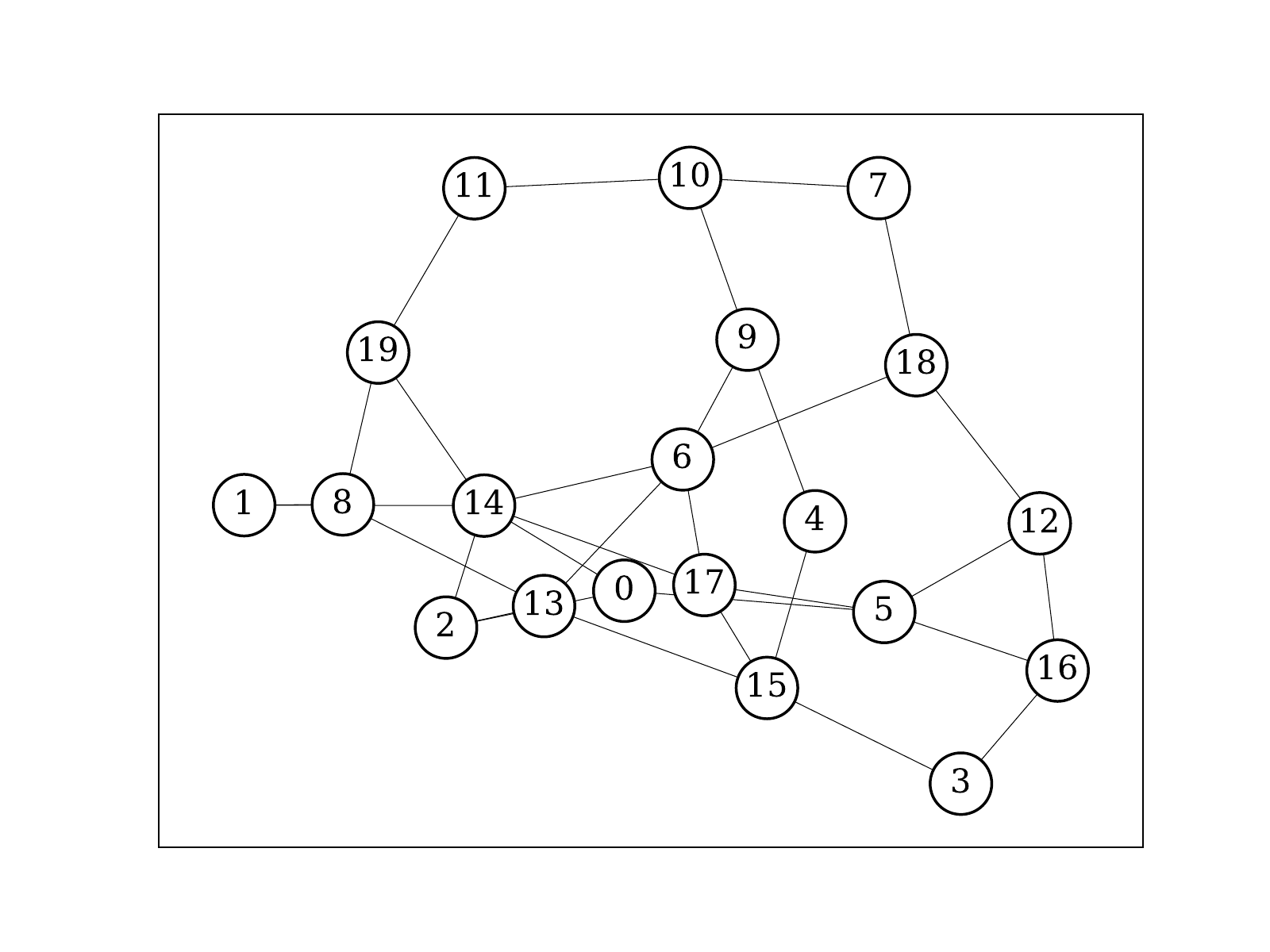}
			\caption{Erdős–Rényi}
		\end{subfigure}
		\begin{subfigure}[b]{0.49\columnwidth}
			\centering
			\includegraphics[width=1.05\columnwidth]{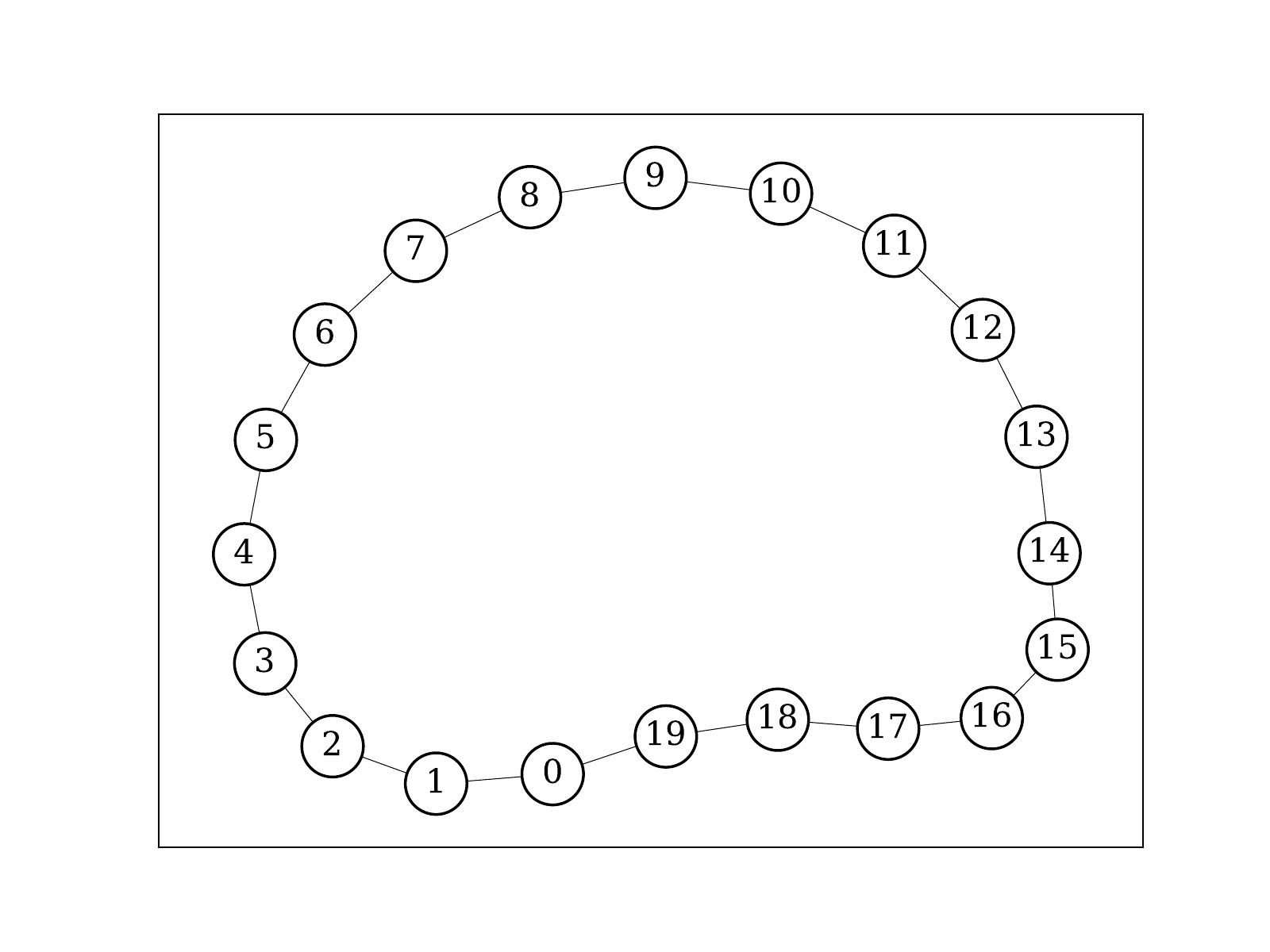}
			\caption{Ring }
		\end{subfigure}
		\caption{Two graph topologies for the simulations}
		\label{fig:network}
		\vspace{-0.15cm}
	\end{figure}
	
	\begin{figure}[t!]
		\centering
		\begin{subfigure}[b]{0.8\columnwidth}
			\centering
			\includegraphics[width=\columnwidth]{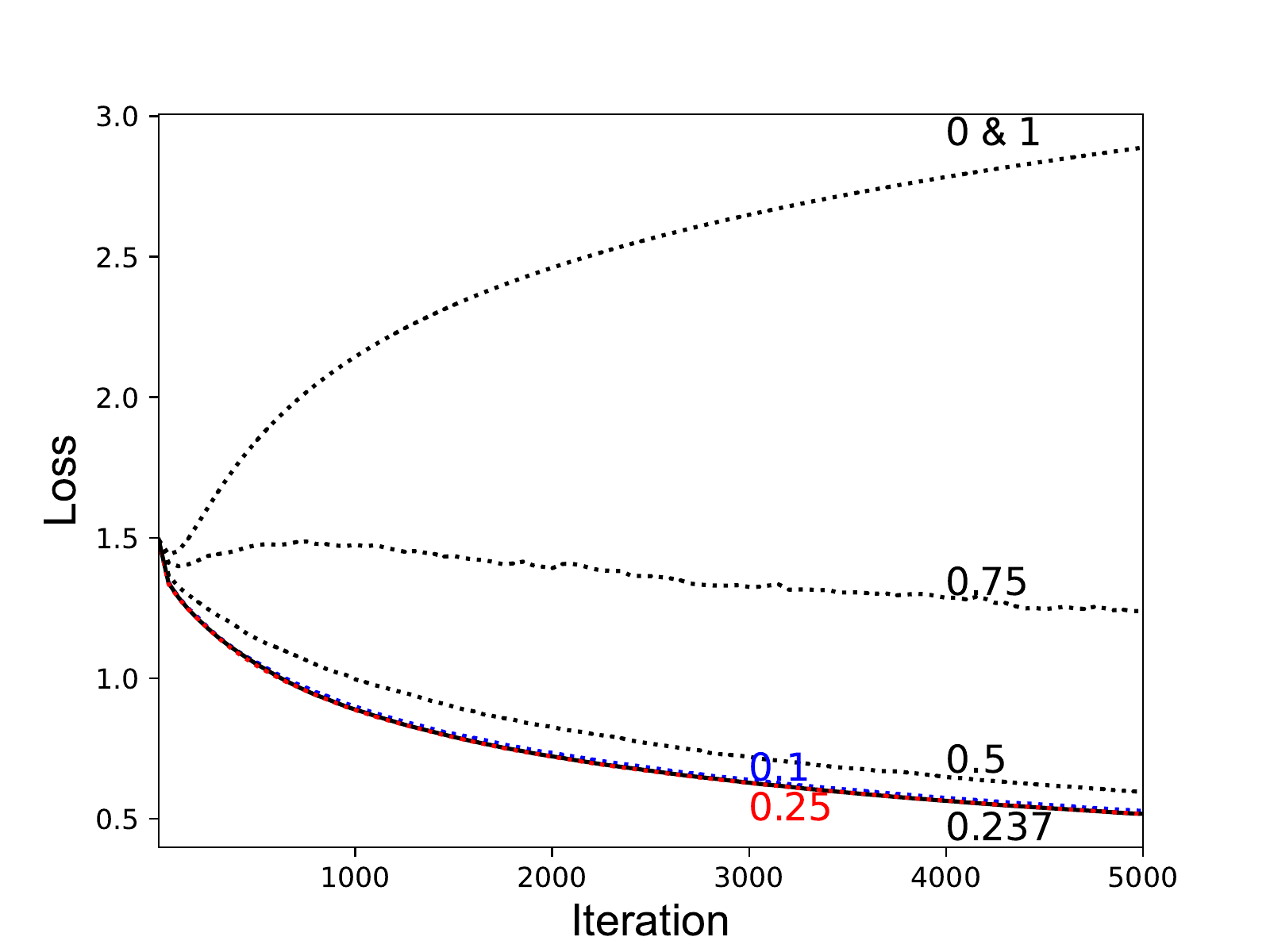}
			\caption{Classification task, loss over 5000 iterations}
			\label{fig:loss}
		\end{subfigure}
		\begin{subfigure}[b]{0.8\columnwidth}
			\centering
			\includegraphics[width=\columnwidth]{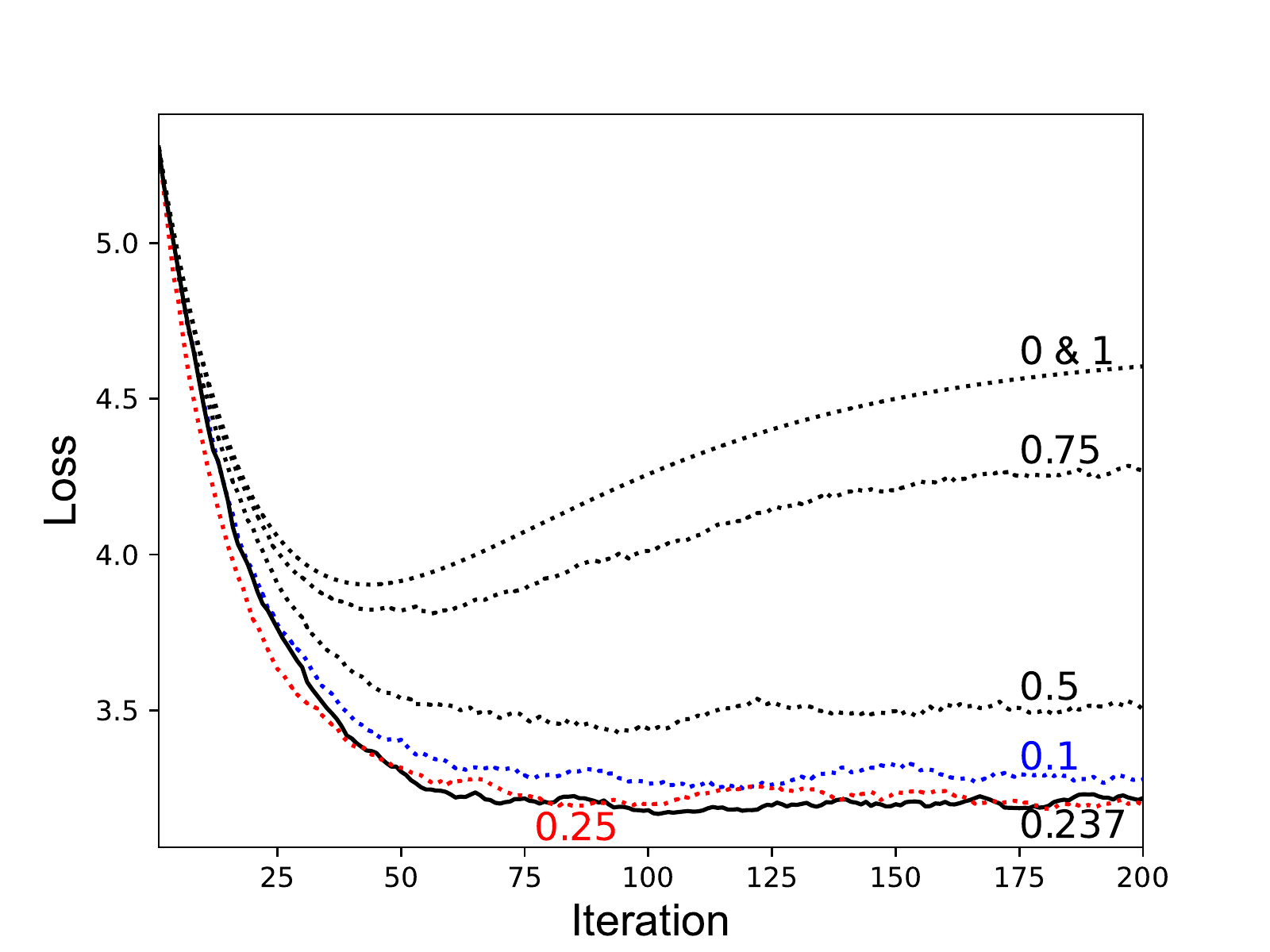}
			\caption{Regression task, loss over 200 iterations.}
			\label{fig:regression-loss}
		\end{subfigure}
		\caption{Erdős–Rényi random graph}
		\label{erdos-renyi-loss}
		\vspace{-0.15cm}
	\end{figure}
	
	\begin{figure}[t!]
		\centering		
		\begin{subfigure}[b]{0.8\columnwidth}
			\centering
			\includegraphics[width=\columnwidth]{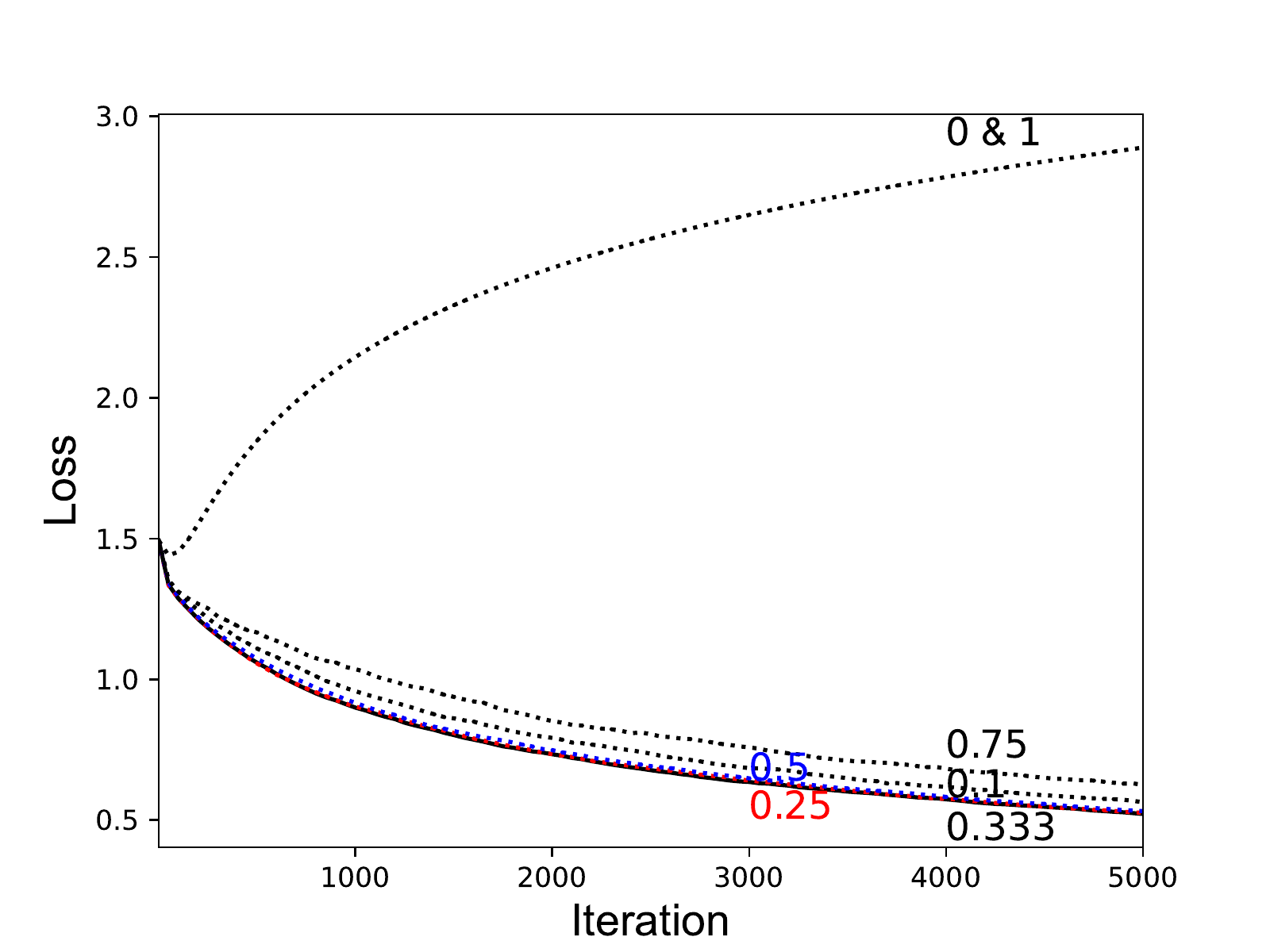}
			\caption{Classification task, loss over 5000 iterations}
			\label{fig:loss-ring}
		\end{subfigure}
		\begin{subfigure}[b]{0.8\columnwidth}
			\centering
			\includegraphics[width=\columnwidth]{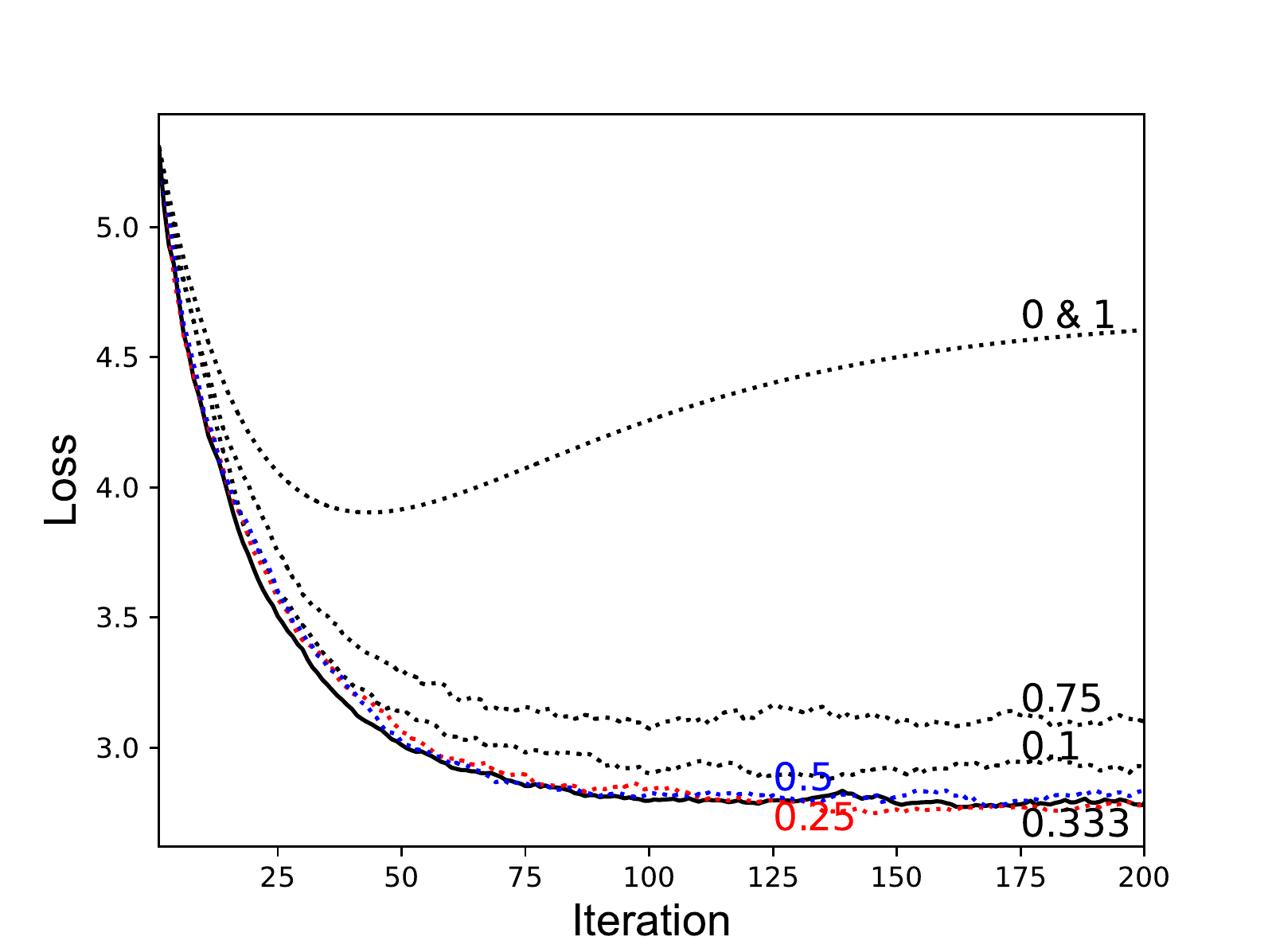}
			\caption{Regression task, loss over 200 iterations.}
			\label{fig:regression-loss-ring}
		\end{subfigure}
		\caption{Ring graph}
		\label{ring-loss}
	\end{figure}
	
	In the regression task, each local dataset $\mathcal{D}_i$ consists of $100$ independently and identically distributed (IID) samples and every node has its own bias value $b_i\sim \text{U}(-1,5)$. In the classification task, each $\mathcal{D}_i$ consists of $100$ IID samples from only one cluster class. Each class is represented by equally many local datasets. This setting creates a non-IID distribution of local datasets in both tasks. 
	The test datasets $\mathcal{D}_{\text{test}}$ in both tasks contain $100\cdot N$ additional samples for both data types, such that every bias value and cluster class have equally many samples. 
	
	The step-size $\eta$ is set as $0.01$, and local gradients $g_i$ are computed using a batch-size of $100$. 
	The performance of the trained model is evaluated by the average loss and accuracy of the local models on the test datasets $\mathcal{D}_{\text{test}}$. 
	\subsection{Effect of Access Probability on System Performance}
	First, in Fig.~\ref{erdos-renyi-loss}, we show the loss of the classification and regression tasks for Erdős–Rényi random graph, with different values of the access probability $p$. As we can see, $p=0$ and $p=1$ give the worst training performance, which is expected as in both cases there is no successful information exchange among the nodes. With non-IID training data, parallel training at different nodes without information fusion will generally lead to poor learning performance. Between $0$ and $1$ there is clearly an optimal value that gives the best result. In this example it corresponds to $p\approx 0.25$.
	In Fig.~\ref{ring-loss}, we present the same results for the ring graph, and we observe that the optimal access probability is $p\approx 0.333$.

	\subsection{Optimal Access Probability for Fast Convergence}
	In Fig.~\ref{fig:optimal-proba}, we show the relation between the optimal probability that maximizes the expected throughput defined in \eqref{eq:random_policy_equal_prob} and the one that minimizes the spectral radius of $\mathbb{E}[\overline{\mathbf{W}}^{(t)}]-\textbf{1}\textbf{1}^{\transp}/N$ (equivalently, the second largest absolute eigenvalue of $\mathbb{E}[\overline{\mathbf{W}}^{(t)}]$). For Erdős–Rényi random graph, these two values are $0.237$ and $0.25$, which are very close to each other. For the ring graph, both values are $0.333$, which can be further justified by our finding in Lemma~\ref{lemma1}.
	
	\begin{figure}[ht!]
		\centering
		\begin{subfigure}[b]{0.5\textwidth}
			\centering
			\includegraphics[width=\textwidth]{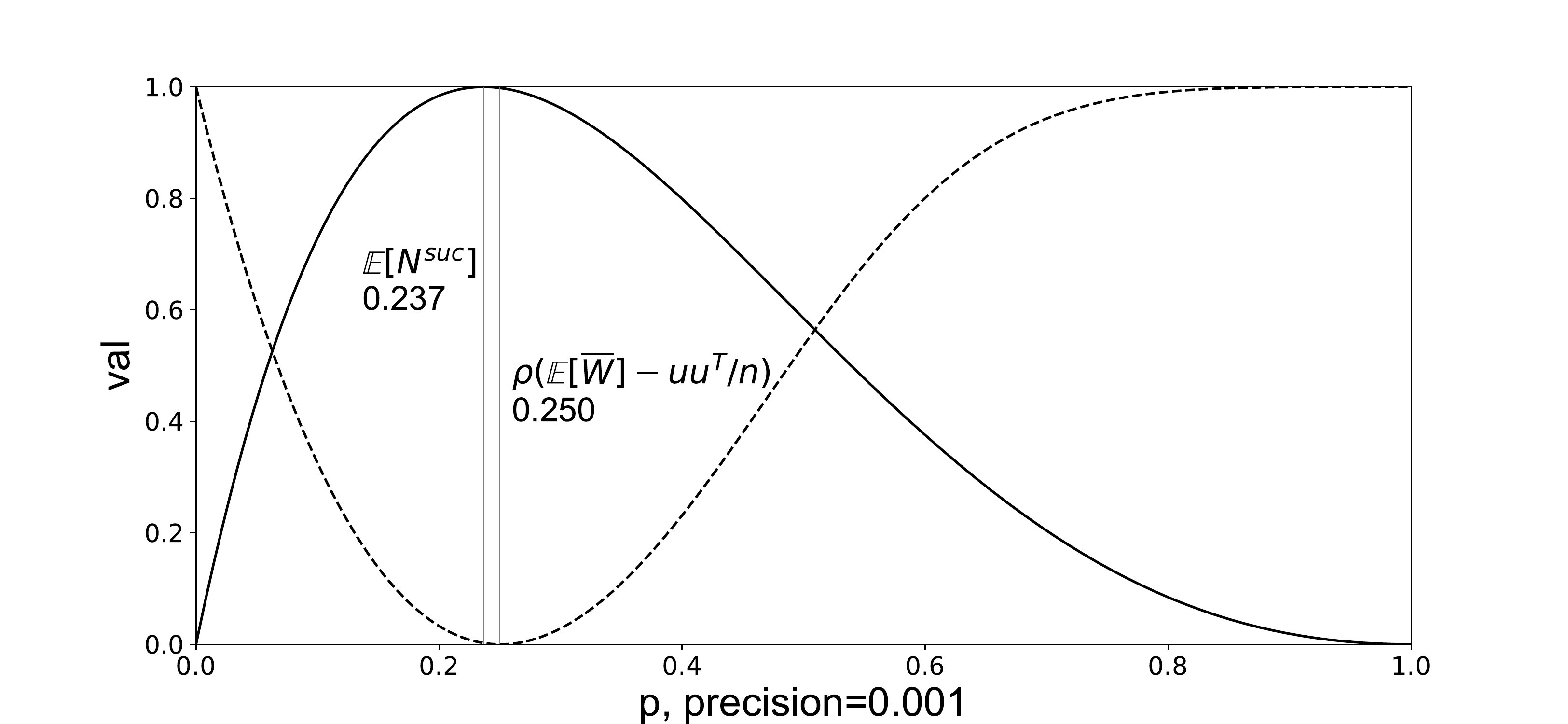}
			\caption{Erdős–Rényi random graph}
		\end{subfigure}
		\hfill
		\begin{subfigure}[b]{0.5\textwidth}
			\centering
			\includegraphics[width=\textwidth]{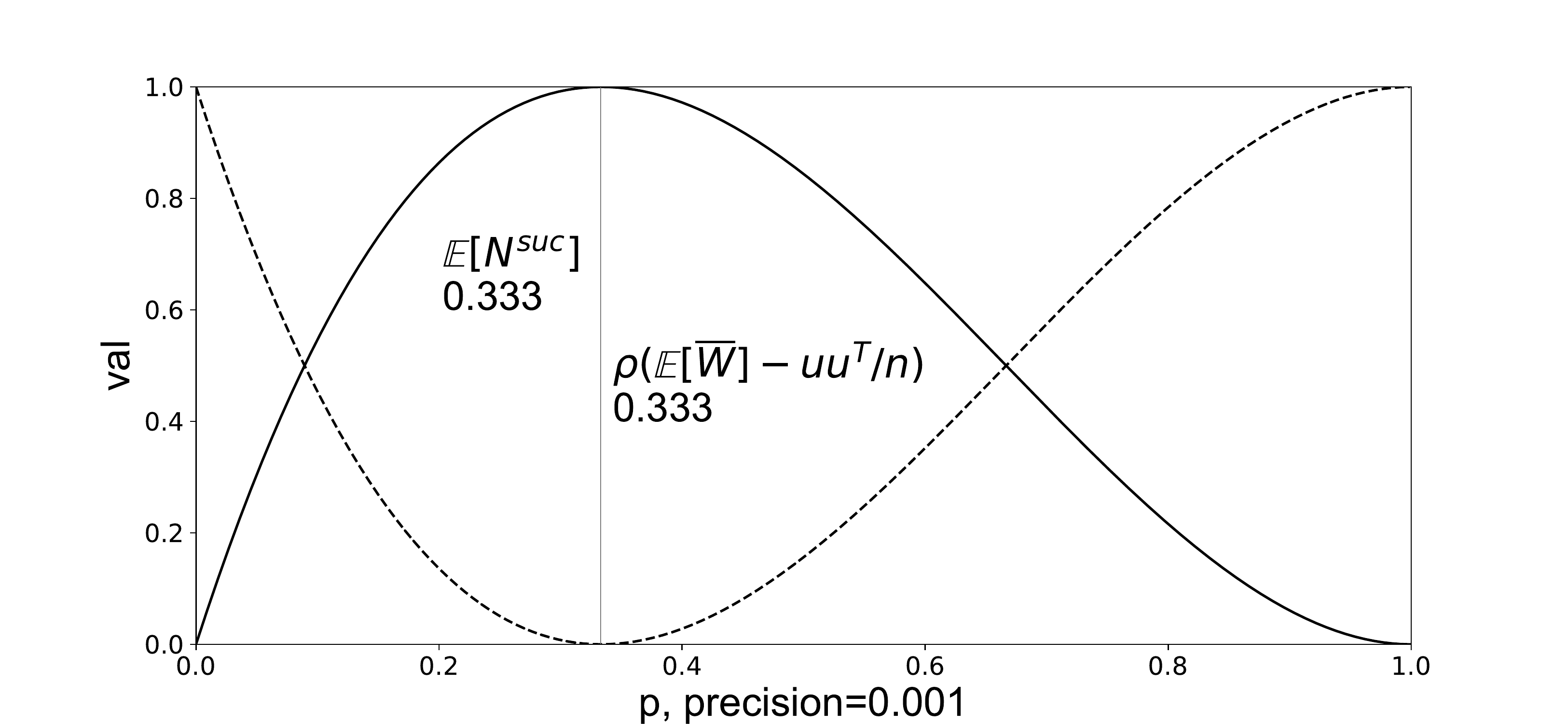}
			\caption{Ring graph}
	\end{subfigure}
	\caption{Comparison between the probability that maximizes $\mathbb{E}[N^{\text{suc}}]$ and the one that minimizes $\rho(\mathbb{E}[\overline{\mathbf{W}}^{(t)}]-\frac{1}{N}\textbf{1}\textbf{1}^{\transp})$. }
	\label{fig:optimal-proba}
	\vspace{-0.1cm}
\end{figure}

\subsection{Discussions}
Using the expected throughput (number of successful links) for measuring the level of information fusion in D-SGD is shown to be effective, but not always optimal.
A potential extension is to consider the importance of each node or link on the connectivity of the graph and the training data representation. This is particularly important for the non-IID data setting, as existing methods for accelerating D-SGD mostly focus on achieving faster convergence, but not on the accuracy of the converged model.
Introducing weights for different nodes or links could marginally improve the choice of the optimal access probability. 
\section{Conclusions}
This work aimed to investigate the effect of broadcast transmission and random access on the performance of decentralized learning over wireless networks. Based on a probabilistic random access scheme with success/collision model, we showed that fast convergence can be achieved by choosing an access probability that maximizes the expected number of successful links in the network. Furthermore, we provided theoretical proof for some special topologies, such as ring and complete graphs. As a future research direction, investigating random access with spatial separation in large-scale wireless networks would be an intriguing extension of this work. 



\begin{thebibliography}{10}
	\providecommand{\url}[1]{#1}
	\csname url@samestyle\endcsname
	\providecommand{\newblock}{\relax}
	\providecommand{\bibinfo}[2]{#2}
	\providecommand{\BIBentrySTDinterwordspacing}{\spaceskip=0pt\relax}
	\providecommand{\BIBentryALTinterwordstretchfactor}{4}
	\providecommand{\BIBentryALTinterwordspacing}{\spaceskip=\fontdimen2\font plus
		\BIBentryALTinterwordstretchfactor\fontdimen3\font minus
		\fontdimen4\font\relax}
	\providecommand{\BIBforeignlanguage}[2]{{%
			\expandafter\ifx\csname l@#1\endcsname\relax
			\typeout{** WARNING: IEEEtran.bst: No hyphenation pattern has been}%
			\typeout{** loaded for the language `#1'. Using the pattern for}%
			\typeout{** the default language instead.}%
			\else
			\language=\csname l@#1\endcsname
			\fi
			#2}}
	\providecommand{\BIBdecl}{\relax}
	\BIBdecl
	
	\bibitem{decentrazlied-first-order}
	R.~Xin, S.~Pu, A.~Nedić, and U.~A. Khan, ``A general framework for
	decentralized optimization with first-order methods,'' \emph{Proceedings of
		the IEEE}, vol. 108, no.~11, pp. 1869--1889, 2020.
	
	\bibitem{distributed-subgradient}
	A.~Nedic and A.~Ozdaglar, ``Distributed subgradient methods for multi-agent
	optimization,'' \emph{IEEE Transactions on Automatic Control}, vol.~54,
	no.~1, pp. 48--61, 2009.
	
	\bibitem{distributed_admmd}
	E.~Wei and A.~Ozdaglar, ``Distributed alternating direction method of
	multipliers,'' in \emph{2012 IEEE 51st IEEE Conference on Decision and
		Control (CDC)}, 2012, pp. 5445--5450.
	
	\bibitem{yuan2016convergence}
	K.~Yuan, Q.~Ling, and W.~Yin, ``On the convergence of decentralized gradient
	descent,'' \emph{SIAM Journal on Optimization}, vol.~26, no.~3, pp.
	1835--1854, 2016.
	
	\bibitem{lian2017can}
	X.~Lian, C.~Zhang, H.~Zhang, C.-J. Hsieh, W.~Zhang, and J.~Liu, ``Can
	decentralized algorithms outperform centralized algorithms? a case study for
	decentralized parallel stochastic gradient descent,'' \emph{Advances in
		neural information processing systems}, vol.~30, 2017.
	
	\bibitem{dso-ml}
	R.~{Xin}, S.~{Kar}, and U.~A. {Khan}, ``Decentralized stochastic optimization
	and machine learning: A unified variance-reduction framework for robust
	performance and fast convergence,'' \emph{IEEE Signal Processing Magazine},
	vol.~37, no.~3, pp. 102--113, 2020.
	
	\bibitem{consensus-cooperation}
	R.~{Olfati-Saber}, J.~A. {Fax}, and R.~M. {Murray}, ``Consensus and cooperation
	in networked multi-agent systems,'' \emph{Proceedings of the IEEE}, vol.~95,
	no.~1, pp. 215--233, 2007.
	
	\bibitem{ye2021decentralized}
	H.~Ye, L.~Liang, and G.~Y. Li, ``Decentralized federated learning with
	unreliable communications,'' \emph{IEEE Journal of Selected Topics in Signal
		Processing}, vol.~16, no.~3, pp. 487--500, 2022.
	
	\bibitem{fl-d2d}
	H.~Xing, O.~Simeone, and S.~Bi, ``Federated learning over wireless
	device-to-device networks: Algorithms and convergence analysis,'' \emph{IEEE
		Journal on Selected Areas in Communications}, vol.~39, no.~12, pp.
	3723--3741, 2021.
	
	\bibitem{async-dl}
	E.~Jeong, M.~Zecchin, and M.~Kountouris, ``Asynchronous decentralized learning
	over unreliable wireless networks,'' in \emph{IEEE International Conference
		on Communications}, 2022, pp. 607--612.
	
	\bibitem{MATCHA}
	J.~Wang, A.~K. Sahu, G.~Joshi, and S.~Kar, ``Matcha: A matching-based link
	scheduling strategy to speed up distributed optimization,'' \emph{IEEE
		Transactions on Signal Processing}, vol.~70, pp. 5208--5221, 2022.
	
	\bibitem{laplacian_sampling}
	C.-C. Chiu, X.~Zhang, T.~He, S.~Wang, and A.~Swami, ``Laplacian matrix sampling
	for communication- efficient decentralized learning,'' \emph{IEEE Journal on
		Selected Areas in Communications}, vol.~41, no.~4, pp. 887--901, 2023.
	
	\bibitem{fagnani2009average}
	F.~Fagnani and S.~Zampieri, ``Average consensus with packet drop
	communication,'' \emph{SIAM Journal on Control and Optimization}, vol.~48,
	no.~1, pp. 102--133, 2009.
	
\end{thebibliography}
\end{document}